\documentclass{article}

\PassOptionsToPackage{numbers, sort}{natbib}

\usepackage[preprint]{neurips_2022}

\usepackage[utf8]{inputenc} %
\usepackage[T1]{fontenc}    %
\usepackage{hyperref}       %
\usepackage{url}            %
\usepackage{booktabs}       %
\usepackage{amsfonts}       %
\usepackage{nicefrac}       %
\usepackage{microtype}      %
\usepackage{xcolor}         %

\usepackage{xspace}
\usepackage{graphicx}

\usepackage{amsthm}
\usepackage{amsmath}
\usepackage{thm-restate}

\usepackage{algorithm}
\usepackage[noend]{algorithmic}

\usepackage[textsize=footnotesize,backgroundcolor=green!10]{todonotes}

\graphicspath{{./figures/}}

\usepackage[sort&compress,nameinlink,capitalize]{cleveref}

\title{The Influence of Dimensions on the Complexity of Computing Decision Trees}

\author{%
  Stephen~G.~Kobourov \\
  \texttt{kobourov@cs.arizona.edu} \\
   \And
   Maarten~L\"offler \\
   \texttt{m.loffler@uu.nl} \\
   \AND
   Fabrizio~Montecchiani \\
   \texttt{fabrizio.montecchiani@unipg.it} \\
   \And
   Marcin~Pilipczuk \\
   \texttt{malcin@mimuw.edu.pl} \\
   \And
   Ignaz~Rutter \\
   \texttt{ignaz.rutter@uni-passau.de} \\
   \And
   Raimund~Seidel \\
   \texttt{rseidel@cs.uni-saarland.de} \\
   \And
   Manuel~Sorge\\
   \texttt{manuel.sorge@ac.tuwien.ac.at} \\
   \And
   Jules~Wulms \\
   \texttt{jwulms@ac.tuwien.ac.at}
}

\theoremstyle{plain}
\newtheorem{theorem}{Theorem}[section]
\newtheorem{lemma}[theorem]{Lemma}
\newtheorem{corollary}[theorem]{Corollary}

\newcommand{\dtslong}{\textsc{Minimum Decision Tree Size}\xspace}
\newcommand{\dts}{\textsc{DTS}\xspace}
\newcommand{\kdts}{$k$-\textsc{DTS}\xspace}

\newcommand{\dimn}{\textsf{dim}}
\newcommand{\thr}{\textsf{thr}}

\newcommand{\lpos}{\ensuremath{\text{blue}}}
\newcommand{\lneg}{\ensuremath{\text{red}}}

\newcommand{\ethlong}{Exponential Time Hypothesis}

\newcommand{\boxx}{\ensuremath{\mathrm{box}}}

\newcommand{\col}{\ensuremath{\textsf{col}}}
\newcommand{\subisolong}{\textsc{Partitioned Subgraph Isomorphism}\xspace}
\newcommand{\subiso}{\textsc{PSI}\xspace}

\newcommand{\lowerbound}{\ensuremath{f(d) \cdot n^{o(d / \log d)}}}

\newcommand{\poly}{\ensuremath{\operatorname{poly}}}

\begin{document}

\maketitle

\begin{abstract}
  A decision tree recursively splits a feature space $\mathbb{R}^{d}$ and then assigns class labels based on the resulting partition.
  Decision trees have been part of the basic machine-learning toolkit for decades.
  A large body of work treats heuristic algorithms to compute a decision tree from training data, usually aiming to minimize in particular the size of the resulting tree.
  In contrast, little is known about the complexity of the underlying computational problem of computing a minimum-size tree for the given training data.
  We study this problem with respect to the number~$d$ of dimensions of the feature space.
  We show that it can be solved in $O(n^{2d + 1}d)$ time, but under reasonable complexity-theoretic assumptions it is not possible to achieve \lowerbound\ running time, where $n$ is the number of training examples.
  The problem is solvable in $(dR)^{O(dR)} \cdot n^{1+o(1)}$ time, if there are exactly two classes and $R$ is an upper bound on the number of tree leaves labeled with the first~class.
\end{abstract}

\section{Introduction}
A decision tree is a useful tool to classify and describe data~\cite{murthy_automatic_1998}.
It takes the feature space $\mathbb{R}^d$, recursively performs axis-parallel cuts to split the space into two subspaces, and then assigns class labels based on the resulting partition (see Figure~\ref{fig:example}).
Because of their simplicity, decision trees are particularly attractive as interpretable models of the underlying data~\cite{molnar_interpretable_2020}.
In this context, small trees are preferable, i.e., trees that have a small number of nodes, or in other words, perform a small number of cuts~\cite{moshkovitz_explainable_2020}.
Such trees are also desired in the context of classification, because it is thought that minimizing the number of nodes reduces the chances of overfitting~\cite{fayyad_what_1990}.

\looseness=-1
In the learning phase, we are given a finite set of examples $E \subseteq \mathbb{R}^d$ labeled with classes and we want to find a decision tree that optimizes certain performance criteria and that is consistent with $E$, that is, the classes assigned by the tree agree with the class labels of the examples.
Among other criteria, the number of nodes is often minimized for the above-mentioned reasons.
There is a plethora of implementations for learning decision trees (e.g., \cite{breiman_classification_1984,bessiere_minimising_2009,narodytska_learning_2018,schidler_satbased_2021,DBLP:conf/nips/HuRS19}).
One of the heuristics herein is among the Top 10 Algorithms of Data Mining chosen by the ICDM~\cite{wu_top_2008,steinberg_cart_2009} and several implementations are based on exact algorithms minimizing the size of the produced trees.
Despite this, our knowledge of the computational complexity of learning (minimum-node) decision trees is limited: Several classical results show NP-hardness~\cite{hyafil_constructing_1976,GMOS95} (see also the survey by \citet{murthy_automatic_1998}) and we know that even if we require parameters such as the number of nodes of the tree, or the number of different feature values, to be small, we still cannot achieve efficient algorithms in terms of upper bounds on the running time~\cite{DBLP:conf/aaai/OrdyniakS21}.

\looseness = -1
In this paper, we study the influence of the number~$d$ of dimensions of the feature space on the complexity of learning small decision trees.
This problem can be phrased as the decision problem \dtslong~(\dts): The input is a tuple $(E, \lambda, s)$ consisting of a set $E \subseteq \mathbb{R}^d$ of examples, a class labeling $\lambda \colon E \to \{\lpos, \lneg\}$, and an integer $s$, and we want to decide whether there is a decision tree for $(E, \lambda)$ of size at most~$s$.
Herein, a binary tree $T$ is decision tree for $(E, \lambda)$ if the labeled partition of $\mathbb{R}^d$ associated with $T$ agrees with the labels $\lambda$ of $E$, see \cref{sec:prelims} for a precise definition.

We provide three main results.
First, we show that \dts\ can be solved in $O(n^{2d + 1}d)$ time (\cref{thm:xp-time}), where $n$ is the number of input examples.
In other words, for fixed number of dimensions, \dts\ is polynomial-time solvable.
Contrast this with the variant where, instead of axis-parallel cuts, we allow linear cuts of arbitrary slopes.
The problem then becomes NP-hard already for~$d=3$~\cite{GMOS95}.

Second, complementing the first result, we show that the dependency on $d$ in the exponent cannot be substantially reduced.
More precisely, we show that a running time of \lowerbound\ would contradict well-known reasonable complexity-theoretic assumptions (\cref{thm:whardness}).
This implies that the running time of algorithms for \dts\ has to scale exponentially with~$d$.
In other words, any provably efficient algorithm for \dts\ has to exploit other properties of the input or desired solution.

\looseness = -1
Our third pair of results determines more closely what parameters influence the combinatorial explosion, offering two tractability results.
A crucial property of a construction that we use in \cref{thm:whardness} is that the size of the optimal decision tree is unbounded.
Informally, this result thus shows intractability only in situations where the smallest decision tree for our input data is rather large.
This may be the case for practical data (partially overlapping classes of Gaussian-distributed data), but it begs the question whether we can find particularly small decision trees provably efficiently if the data allow for it.
As Ordyniak and Szeider showed, without further restrictions, this is not possible as \dts\ is W[2]-hard with respect to the solution size~$s$~\cite{DBLP:conf/aaai/OrdyniakS21}.
In contrast, we show that in the small-dimension regime we do obtain a prospect for an efficient algorithm with running time $O((s^3d)^{s} \cdot n^{1 + o(1)})$ (\cref{thm:improved-binary-search}).
An intermediate result towards this is inspired by and improves upon an algorithm by \citet{DBLP:conf/aaai/OrdyniakS21} for determining a smallest decision tree that cuts only a given set of features; we decrease the running time from $2^{O(s^2)} \cdot \poly(n)$ to $2^{O(s \log s)} \cdot n^{1 + o(1)}$ for our purpose.

Finally, we show that in the tractability result with respect to $s$ and $d$, the size $s$ can be replaced by an a priori even smaller parameter:
Let $R$ be an upper bound on the number of leaves in the decision tree labeled with any one class.
Equivalently, $R$ is an upper bound on the number of parts in the partition induced by the tree that contain examples of the first class, or that only contain examples of the second class, whichever number is smaller.
Then, \dts\ is solvable in $(dR)^{O(dR)} \cdot n^{1+o(1)}$ time (\cref{thm:fpt-dim-leaves}).
While mostly interesting from a theoretical perspective, as \dts\ is NP-hard even for $R = 1$ in the unbounded-dimension regime \cite{DBLP:conf/aaai/OrdyniakS21}, we believe that restricting the desired decision tree to a small number $R$ of leaves labeled with one class has a reasonable practical motivation: In some situations the distribution of the data may not allow for particularly small decision trees, hampering interpretability.
In this case, it may be useful to look for trees in which one class labels few leaves. %

Summarizing, while $n^{O(d)}$-time algorithms for \dts\ are achievable, they cannot be substantially improved in general, but when restricting to small solution sizes or a class with small number of leaves, there are prospects for efficient algorithms.

%
%
%

%
%
%
%
%
%
%
%
%
%
%

%

%

%
%
%
%
%
%
%
%
%
%

%
%
%
%
%
%
%
%
%

\section{Preliminaries}\label{sec:prelims}

For $n \in \mathbb{N}$ we use $[n] := \{1, 2, \ldots, n\}$.
For a vector $x \in \mathbb{R}^d$ we denote by $x[i]$ the $i$th entry of~$x$.

Let $E \subseteq \mathbb{R}^d$ and $\lambda \colon E \to \{\lpos, \lneg\}$. Furthermore, let the domain~$D_i := \{ x[i] \mid x \in E\}$ of~$E$ consist of all distinct coordinate values for dimension~$i$ occurring in examples of~$E$.
We aim to define what a decision tree for $(E, \lambda)$ is.
Let $T$ be a rooted and ordered binary tree and let $\dimn \colon V(T) \to [d]$ and $\thr \colon V(T) \to \mathbb{R}$ be labelings of each inner node $t \in V(T)$ by a \emph{dimension} $\dimn(t) \in [d]$ and a \emph{threshold} $\thr(t) \in \mathbb{R}$.
For each inner node $t$ of $T$ there is a left and a right child of $t$, labeled by $\leq$ and $>$, respectively; see Figure~\ref{fig:example} for an illustration.

\begin{figure}[t]
    \centering
    \includegraphics{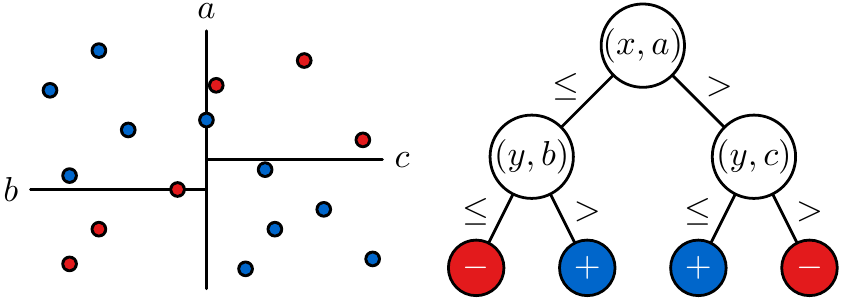}
    \caption{Left: An instance $(E,\lambda)$ of \dts\ with two dimensions $x$ (horizontal) and $y$ (vertical).
      Points are examples and the $\lpos$ and $\lneg$ color represents their labels assigned by $\lambda$.
      Right: A minimum decision tree $T$ for $(E,\lambda)$.
      Each internal node $t\in T$ is labeled by $(\dimn(t),\thr(t))$.}
    \label{fig:example}
\end{figure}

\looseness = -1
We use~$E[f_i \leq t] = \{ x\in E \mid x[i] \leq t\}$ and~$E[f_i > t] = \{ x\in E \mid x[i] > t\}$ to denote the set examples of~$E$ whose $i$th dimension is less or equal and strictly greater than some threshold~$t$, respectively.
Each node $t \in V(T)$, including the leaves, defines a subset $E[T, t] \subseteq E$ as follows.  For the root~$t$ of~$T$, we define~$E[T,t] := E$.  For each non-root node~$t$ let~$p$ denote the parent of~$t$.  We then define~$E[T,t] := E[T,p] \cap E[f_{\dimn(p)} \le \thr(p)]$ if $t$ is the left child of~$p$ and~$E[T,t] := E[T,p] \cap E[f_{\dimn(p)} > \thr(p)]$ if $t$ is the right child of~$p$.
If the tree $T$ is clear from the context, we simplify $E[T, t]$ to $E[t]$.

Now $T$ and the associated labelings are a \emph{decision tree} for $(E, \lambda)$ if for each leaf~$\ell$ of $T$ we have that all examples in $E[\ell]$ have the same label under $\lambda$.
Below we will sometimes omit explicit reference to the labelings of the nodes and edges of $T$ and simply say that $T$ is a decision tree for $(E, \lambda)$.
The \emph{size} of $T$ is the number of its inner nodes.
We conveniently call the nodes of $T$ and their associated labels \emph{cuts}.
The problem \dtslong~(\dts) is defined as in the introduction. For most of our results, the number of classes of the labeling~$\lambda$ does not have to be restricted to two. We therefore also introduce \kdts\ as the generalization of \dts\ in which $\lambda \colon E \to [k]$.

\looseness=-1
Our analysis is within the framework of parameterized complexity~\cite{gottlob_fixedparameter_2002}.
Let $L \subseteq \Sigma^{*}$ be a computational problem specified over some alphabet $\Sigma$ and let $p \colon \Sigma^{*} \to \mathbb{N}$ a parameter, that is, $p$ assigns to each instance of $L$ an integer parameter value (which we simply denote by $p$ if the instance is clear from the context).
We say that $L$ is \emph{fixed-parameter tractable} (FPT) with respect to $p$ if it can be decided in $f(p) \cdot \poly(n)$ time where $n$ is the input encoding length.
A complement to fixed-parameter tractability is W[$t$]-hardness, $t \geq 1$; if problem $L$ is W[$t$]-hard with respect to $p$ then it is thought to not be fixed-parameter tractable; see \cite{flum_parameterized_2006,Nie06,CyFoKoLoMaPiPiSa2015,downey_fundamentals_2013} for details.
The Exponential Time Hypothesis (ETH) states that \textsc{3SAT} on $n$-variable formulas cannot be solved in $2^{o(n)}$~time, see refs.~\cite{impagliazzo_complexity_2001,impagliazzo_which_2001} for details.

\section{Algorithms}

\looseness=-1
We now present our algorithmic results, that is, that \kdts\ is polynomial-time solvable for constant number of dimensions, an improved algorithm for the case where we are restricted to a given set of dimensions to cut, a fixed-parameter algorithm for $d + s$ and one for~$d + R$.
For simplicity, we give our algorithms for the decision problem ($k$-)\dts, but with standard techniques they are easily adaptable to also producing the corresponding tree if it exists and to the corresponding size-minimization problem.

For the first result we use the order of the examples~$E$ in each dimension. In a dimension~$i$, let $c_1, c_2, \ldots, c_n$ be the coordinate values in $D_i$ in ascending order. We use the set~$S_i$ of \emph{splits}, that is, cuts that each partition $E$ in a combinatorially distinct way%
. Specifically, we can define the set $S_i := \{(c_j + c_{j+1})/2 \mid j\in[n-1]\} \cup \{-\infty,\infty\}$. When there are at most $D_{\max}$ different values used in each dimension, we have the following bound: $|S_i| = |D_i| -1 +2 \leq D_{\max}+1 = O(n)$.

\begin{theorem}\label{thm:xp-time}
    \kdts\ is solvable in $O(D_{\max}^{2d}dn) = O(n^{2d+1}d)$ time. 
\end{theorem}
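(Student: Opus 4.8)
The plan is to solve \kdts{} by dynamic programming over "combinatorial boxes": axis-parallel regions of $\mathbb{R}^d$ whose boundary in each dimension is one of the $O(D_{\max})$ splits from $S_i$. Concretely, a box is a tuple $B = (I_1, \ldots, I_d)$ where each $I_i$ is an interval of the form $(a, b]$ with $a, b \in S_i$ (treating $-\infty$ and $\infty$ as allowed endpoints), so that the number of boxes is $\prod_{i=1}^d \binom{|S_i|}{2} \cdot O(1) = O(D_{\max}^{2d})$. Each node~$t$ of any decision tree~$T$ for $(E,\lambda)$ corresponds to such a box, namely the region of $\mathbb{R}^d$ reaching~$t$; and $E[T,t]$ is exactly $E \cap B$. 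So it suffices to compute, for every box~$B$, the value $\mathrm{opt}(B)$ defined as the minimum size of a decision subtree that correctly classifies $E \cap B$ (i.e.\ every leaf region contains examples of at most one class).

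First I would set up the recurrence. If $E \cap B$ is monochromatic (all same label, or empty), then $\mathrm{opt}(B) = 0$: a single leaf suffices, contributing no inner nodes. Otherwise we must make a cut: pick a dimension $i \in [d]$ and a split $s \in S_i$ that lies strictly inside $I_i$, split $B$ into $B_{\le}$ (replace $I_i$ by $I_i \cap (-\infty, s]$) and $B_{>}$ (replace $I_i$ by $I_i \cap (s, \infty)$), and recurse; thus
\[
  \mathrm{opt}(B) = 1 + \min_{i \in [d]} \ \min_{\substack{s \in S_i \\ s \in \mathrm{int}(I_i)}} \bigl(\mathrm{opt}(B_{\le}) + \mathrm{opt}(B_{>})\bigr).
\]
One must argue that restricting threshold choices to $S_i$ loses nothing: any real threshold partitions $E \cap B$ identically to some split in $S_i$, and replacing it does not change the sets routed to the children, so an optimal tree can be assumed to use only splits from the $S_i$. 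The answer to the instance is then ``yes'' iff $\mathrm{opt}(B_0) \le s$, where $B_0 = ((-\infty,\infty),\ldots,(-\infty,\infty))$ is the full space.

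For the running time: the table has $O(D_{\max}^{2d})$ entries, and we evaluate them in a suitable order — e.g.\ by increasing $|E \cap B|$, or more simply by processing boxes from smallest total ``width'' (number of splits strictly inside, summed over dimensions) upward, since both children in the recurrence are strictly narrower than~$B$ in exactly one coordinate. Each entry requires a minimum over $O(d \cdot D_{\max})$ pairs $(i,s)$; looking up each child box and the precomputed monochromaticity of $E \cap B$ is $O(1)$ after a preprocessing pass that computes $|E \cap B|$ and the per-class counts for all boxes by another pass of dynamic programming (prefix-sum style over the $d$-dimensional grid), which also fits in $O(D_{\max}^{2d} \cdot \poly(d))$ time. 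Since $D_{\max} \le n$, this gives $O(D_{\max}^{2d} d \, n) = O(n^{2d+1} d)$ overall, where the extra factor accounting for basic arithmetic on coordinates is absorbed.

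The main obstacle I expect is the correctness argument that independent optimization of the two children is valid — i.e.\ that the optimal decision tree ``decomposes'' along cuts. This is essentially immediate from the definition (the subtree hanging off a child of the root is itself a decision tree for the examples routed there, and its size is independent of the rest), but it does rely on the crucial fact that the set $E[T,t]$ at a node depends only on the box~$B$, not on the particular sequence of cuts producing it; different cut-orders reaching the same box see the same examples, so memoizing on boxes rather than on cut-sequences is sound. Making the ``splits suffice'' reduction precise (handling the $\le$ vs.\ $>$ boundary carefully, and the $\pm\infty$ sentinels so that a cut can always be ``trivial'' if ever needed, though in an optimal tree it never is) is the other routine-but-necessary piece.
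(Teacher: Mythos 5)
Your proposal is correct and follows essentially the same approach as the paper: a dynamic program over axis-parallel boxes with corners on the split grid, ordered by box size, with the same recurrence minimizing over the $O(dD_{\max})$ choices of cutting dimension and split, and the same justification that thresholds may be snapped to splits. The only difference is a minor implementation detail (precomputing per-class counts by prefix sums versus the paper's direct $O(nd)$ scan per box), which does not change the bound.
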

\begin{proof}
    Observe that, by adjusting the thresholds, a minimum decision tree for~$E$ can be modified such that for each node~$t$ it is~$\thr(t) \in S_{\dimn(t)}$.  Let~$S = S_1 \times S_2 \times \ldots \times S_d$. %
    
    \looseness = -1
    We do dynamic programming on hyperrectangles defined by two points: for $u,v\in S$, let $\boxx(u,v)$ be an axis-aligned hyperrectangle with $u$ and $v$ as antipodal corners.  We require that each coordinate of $u$ is smaller than the respective coordinate of $v$. 
    We are interested in computing a minimum decision tree $T$ for the examples in $E \cap \boxx(u,v)$.  This solution can be found by cutting $\boxx(u,v)$ with an axis-aligned hyperplane~$H$, and combining the minimum-size decision trees~$T_1,T_2$ for the examples on either side of the hyperplane. Since hyperplane~$H$ is defined by a dimension~$i$ and threshold~$t \in S_i$, we can find two new points $u'$ and $v'$, whose coordinates coincide with $u$ and $v$, respectively, but in dimension $i$ they have coordinate $t$.  Note that, since no example has coordinate~$t$ in dimension~$i$, $E \cap \boxx(u,v')$ and~$E \cap \boxx(u',v)$ partition the examples in~$E \cap \boxx(u,v)$; see Figure~\ref{fig:box-cut}.
    
    \begin{figure}[t]
        \centering
        \includegraphics{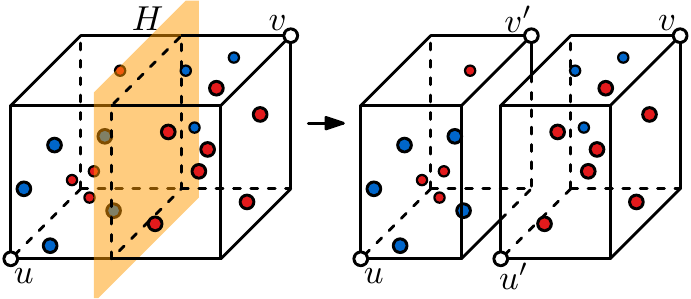}
        \caption{3D instance where hyperplane~$H$ cuts $\boxx(u,v)$ into $\boxx(u,v')$ and $\boxx(u',v)$, and splits examples $E \cap \boxx(u,v)$ into $E \cap \boxx(u,v')$ and $E\cap \boxx(u',v)$.}
        \label{fig:box-cut}
    \end{figure}
    
    We can therefore use hyperplane~$H$ to define the root node~$(i,t)$ of decision tree~$T$, with $T_1$ and $T_2$ as subtrees. These subtrees are found by recursively computing a solution for $E \cap \boxx(u,v')$ and $E\cap \boxx(u',v)$.  Since there are $i$ dimensions, and each of them admits at most~$D_{\max}+1$ distinct hyperplanes that we can use to split, an optimal solution for~$E \cap \boxx(u,v)$ can be computed by trying all $O(D_{\max}\cdot d)$ hyperplanes defined by distinct splits in each dimension.
    
    Formally, let~$\mathcal{T}$ be a dynamic programming table, in which for each $u, v \in S$ with $u \leq v$, entry~$\mathcal{T}[u,v]$ holds the minimum size of a decision tree for $E \cap \boxx(u,v)$.
    Define the \emph{volume} of $\boxx(u, v)$ as the number of grid points contained within it, that is, $|S \cap \boxx(u, v)|$.
    We fill the table in a bottom-up fashion from smaller-volume boxes to larger-volume boxes.
    We iterate over the range of possible volumes $k$ from one to $|S|$ and we compute $\mathcal{T}[u, v]$ for all $u, v \in S$ such that $\boxx(u, v)$ has volume~$k$.
    For each such $u, v$, we first take $O(nd)$ time to check whether the examples in $\boxx(u, v)$ all have the same label.
    If so, then we put $\mathcal{T}[u,v]=0$.
    Otherwise, $\mathcal{T}[u,v]$ is defined by the following recurrence: 
    $$\mathcal{T}[u,v] = \min_{i\in [d]}\min_{\sigma \in S_i} \mathcal{T}[u,v_i(\sigma)] + \mathcal{T}[u_i(\sigma),v] + 1.$$
    Thus, for each coordinate $i$ of $u$ and $v$, we use the values $S_i$ as options.
    Note that each considered table entry on the right-hand side corresponds to a box of strictly smaller volume than the box on the left-hand side. 
    To see that the recurrence is correct, observe that a minimum-size decision tree~$T$ for $E \cap \boxx(u, v)$ requires at least one cut.
    Consider the cut $t$ at the root of $T$, shifted to a closest split if necessary.
    At some point while taking the minimum we have $i = \dimn(t)$ and $\sigma = \thr(t)$.
    The subtrees of $T$ rooted at the children of $t$ are decision trees for $E \cap \boxx(u,v_i(\sigma))$ and $E \cap \boxx(u_i(\sigma),v)$, respectively.
    Thus, the left-hand side upper bounds the right-hand side.
    For the other direction, observe that combining any two decision trees for $E \cap \boxx(u,v_i(\sigma))$ and $E \cap \boxx(u_i(\sigma),v)$ with a cut at $\sigma$ in dimension~$i$ gives a decision tree for $E \cap \boxx(u, v)$, as required.
    Finally, the size of the minimum-size decision tree for $E$ can be found in $\mathcal{T}[u^*, v^*]$, where all coordinates of $u^*$ and $v^*$ are $-\infty$ and $\infty$, respectively.
    
    As to the running time, table~$\mathcal{T}$ has $O(D_{\max}^{2d})$ entries, each of which each takes $O(nd)$ time to fill: checking for consistency and then trying each distinct split and looking up the size of the respective minimum-size subtrees.
    We proceed by increasing the domain in each dimension by one split at a time, one coordinate at a time.
    Thus, the total running time adds up to $O(D_{\max}^{2d}nd) = O(n^{2d+1}d)$.
\end{proof}

\looseness=-1
Before we elaborate on our next result, we first show how to improve Theorem~4 in~\cite{DBLP:conf/aaai/OrdyniakS21}, which shows that, given an instance $(E, \lambda, s)$ of \dts, and given a subset~$\mathcal{D}$ of the dimensions, it is possible to compute in $2^{O(s^2)}|E|^{1+o(1)}\log{|E|}$ time the smallest decision tree among all decision trees of size at most~$s$ (if they exists), that use exactly the given subset~$\mathcal{D}$ in their cuts --- none can be left out.
The main idea is to first enumerate the structure of all possible decision trees of size~$s$, before finding thresholds that work for the instance.
Instead of enumerating all possible decision trees and finding the right thresholds afterwards, we interleave the two processes, see Algorithm~\ref{alg:improved-binary-search}. Furthermore, we no longer take as input a subset~$\mathcal{D}$ of dimensions, which will be used for labeling internal nodes in the decision tree, but we simply bound the number~$d$ of dimensions in the instance.
By iterating over a subset~$\mathcal{D}$ instead of all $d$ dimensions, Algorithm~\ref{alg:improved-binary-search} can be adapted towards the initial setting.

\renewcommand{\algorithmicrequire}{\textbf{Input:}}
\renewcommand{\algorithmicensure}{\textbf{Output:}}

\begin{algorithm}[t]
  \caption{\textsc{SmallestDecisionTree}$(E, d,s)$}\label{alg:improved-binary-search}
  \begin{algorithmic}[1]
  \REQUIRE Example set~$E$ and numbers~$s$ and~$d$
  \ENSURE Decision tree~$T$ for $E$ with at most $s$ internal nodes using $d$ dimensions to label internal nodes.
  
  \smallskip 
\STATE Set $sdt$ to $nil$, with $|nil| = \infty$
\IF{$s = 0$}
    \IF{$E$ is uniform}\label{alg:uniform}
        \RETURN $leaf$, with $|leaf| = 0$
    \ELSE
        \RETURN $nil$, with $|nil| = \infty$
    \ENDIF
\ENDIF
\FOR{$i = 1$ to $d$}
    \FOR{$j = 0$ to $s-1$}
        \STATE $t = \textsc{BinarySearch}(E, i, j)$
        \STATE $r = \textsc{SmallestDecisionTree}(E[f_i > t],d, s-j-1)$
        \STATE $l = \textsc{SmallestDecisionTree}(E[f_i \leq t], d,j)$
        \STATE $dt = (f_i = t) \cup (l, r)$, with $|dt| = |l| + |r| + 1$
        \IF{$|dt| < |sdt|$}\label{alg:smallest}
            \STATE $sdt = dt$
        \ENDIF
    \ENDFOR
\ENDFOR
\RETURN $sdt$
\end{algorithmic}
\end{algorithm}

\begin{algorithm}[t]
  \caption{\textsc{BinarySearch}$(E,i,j)$}\label{alg:binary-search}
  \begin{algorithmic}[1]
  \REQUIRE Example set~$E$ and numbers~$i$ and $j$
  \ENSURE Largest threshold $t$ for which $E[f_i \leq t]$ has a decision tree of size~$j$ 
  
  \smallskip 
\STATE Set $D$ to be an array containing $D_i$ in ascending order
\STATE Set $L = 0$, $R = |D_i|-1$, $b = 0$
\WHILE{$L \leq R$}
    \STATE $m = \lfloor(L+R) / 2\rfloor$
    \IF{$\textsc{SmallestDecisionTree}(E[f_i \leq D[m]],d,j) \neq nil$}
        \STATE $L = m+1$, $b = 1$
    \ELSE
        \STATE $R = m-1$, $b = 0$
    \ENDIF
\ENDWHILE
\IF{$b = 1$}
    \RETURN $D[m]$
\ENDIF
\RETURN $D[m-1]$, with $D[-1] = D[0]-1$
\end{algorithmic}
\end{algorithm}

\begin{restatable}{theorem}{improvedBinarySearch}\label{thm:improved-binary-search}
  \kdts\ is solvable in $O((s^3d)^{s}|E|^{1+o(1)})$
  time, and is FPT parameterized by $s+d$.%
\end{restatable}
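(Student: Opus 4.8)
The plan is to show that Algorithm~\ref{alg:improved-binary-search} (with the subroutine \textsc{BinarySearch}, Algorithm~\ref{alg:binary-search}) is correct and runs within the stated bound; the FPT statement then follows immediately, as $(s^{3}d)^{s}$ depends only on $s+d$ while $|E|^{1+o(1)}$ is polynomial in the input size. Nothing below uses that there are only two classes, so everything applies verbatim to \kdts, with ``uniform'' meaning ``all examples have the same class''. For correctness I would prove by induction on $s$ that \textsc{SmallestDecisionTree}$(E,d,s)$ returns a decision tree for $E$ with at most $s$ inner nodes if one exists, and $nil$ otherwise (so running it and answering ``yes'' iff the result is not $nil$ decides \kdts; minimality of the returned tree, for the optimization variant, needs a minor adjustment mentioned below). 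The base case $s=0$ is the explicit uniformity test. One direction of the step is routine: whenever the call returns a tree $dt$, its root cut $(f_i=t)$ together with the inductively valid subtrees $l,r$ for $E[f_i\le t]$ and $E[f_i>t]$ form a consistent tree of size $|l|+|r|+1\le j+(s-j-1)+1=s$, because $E[f_i\le t]$ and $E[f_i>t]$ partition $E$. The substantive direction is to exhibit, assuming a decision tree of size at most $s$ exists, an iteration of the double loop that produces one.

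The crux is an exchange argument resting on one monotonicity observation: a decision tree consistent with a point set $E'$ is also consistent with every subset of $E'$, since leaf-uniformity is inherited downward. Hence, for fixed dimension $i$, the predicate ``$E[f_i\le t]$ admits a decision tree of size $\le j$'' is monotone non-increasing in $t$, so \textsc{BinarySearch} is correct and returns the largest coordinate value $t$ with this property. Now take a minimum-size decision tree $T^{\ast}$ with $|T^{\ast}|\le s$; it cuts some dimension $i^{\ast}$ at some threshold $t^{\ast}$ at the root, its left subtree has some size $j^{\ast}\le |T^{\ast}|-1\le s-1$, and its right subtree has size $|T^{\ast}|-j^{\ast}-1$. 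In the iteration $i=i^{\ast}$, $j=j^{\ast}$, \textsc{BinarySearch} returns a threshold $t$ at least as large as the coordinate value realizing the partition at $t^{\ast}$, so on one hand $E[f_{i^{\ast}}\le t]$ has a decision tree of size $\le j^{\ast}$ (by the choice of $t$), and on the other hand $E[f_{i^{\ast}}>t]\subseteq E[f_{i^{\ast}}>t^{\ast}]$ inherits the right subtree of $T^{\ast}$ as a decision tree of size $\le |T^{\ast}|-j^{\ast}-1\le s-j^{\ast}-1$. By the inductive hypothesis both recursive calls succeed, so this iteration produces a valid $dt$; comparing $|l|+|r|+1$ with $|T^{\ast}|$ also gives minimality, up to the corner case of a uniform $E$ reached with $s\ge 1$, which for the optimization variant is fixed by adding the uniformity test at the top of the procedure regardless of $s$.

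For the running time, observe that every recursive call made from within \textsc{SmallestDecisionTree}$(\cdot,d,s')$ — whether inside \textsc{BinarySearch} or for $l$ or $r$ — has a strictly smaller second argument, so the recursion has depth at most $s$. At one node the double loop has at most $ds'\le ds$ iterations, each triggering one \textsc{BinarySearch} call (which itself issues $O(\log|E|)$ recursive calls) plus the two calls for $l$ and $r$, so the branching factor is $O(ds\log|E|)$ and the recursion tree has $(O(ds\log|E|))^{s}$ nodes. Using the standard estimate $(\log|E|)^{s}=s^{O(s)}\cdot|E|^{o(1)}$ (distinguish $\log|E|\le s^{2}$ from $\log|E|>s^{2}$), this is $(s^{3}d)^{s}\cdot|E|^{o(1)}$. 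Each node spends $|E|^{1+o(1)}$ time apart from its recursive calls — the uniformity test is $O(|E|)$, and the binary searches run over sorted lists of coordinate values, which can be computed once and maintained along the recursion — with any residual $\poly(s,d)$ overhead absorbed into the first factor. Multiplying yields the claimed $O((s^{3}d)^{s}\,|E|^{1+o(1)})$, and hence membership in FPT for parameter $s+d$.

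The step I expect to be the main obstacle is the exchange argument, and specifically the need to control both sides at once: pushing the root threshold up to the value returned by the binary search keeps the left part within its size budget by construction, but one must separately argue that it also keeps the right part within budget, which is exactly where subset-monotonicity of consistency is used together with the fact that the right part only shrinks. A minor additional nuisance is reconciling the treatment of uniform inputs between the $s=0$ branch and the $s\ge 1$ branch when exact minimality, rather than just the decision version, is desired.
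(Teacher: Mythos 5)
Your proposal is correct and follows essentially the same route as the paper's proof: induction on $s$, the monotonicity-of-consistency argument justifying the binary search and the exchange of the root threshold (pushing it up only shrinks the right part, which then inherits the right subtree of $T^{\ast}$), and the same recursion-tree count $(O(ds\log|E|))^{s}$ combined with the $(\log m)^{s}=s^{O(s)}m^{o(1)}$ estimate. You are in fact slightly more explicit than the paper about why the binary search predicate is monotone and about the uniform-input corner case for the minimization variant.
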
%
\begin{proof}
    Similar to the algorithm by Ordyniak and Szeider~\cite{DBLP:conf/aaai/OrdyniakS21}, we binary search for the largest threshold value~$t$ in dimension~$i$ for which the left subtree can still have a decision tree of size $j$ on the example set~$E[f_i \leq t]$. If we can find a decision tree of size at most $s-j-1$ for the remaining examples in the right subtree, then we have found a decision tree. However, if we cannot find such a decision tree for the right subtree, then we could not find a decision tree even for smaller threshold values of the root: there would be even more examples left for the right decision tree, which should still be of size at most $s-j-1$.
    
    However, instead of enumerating all trees and assignments of dimension labels to internal nodes, we loop over these options during the main procedure in Algorithm~\ref{alg:improved-binary-search}.
    We first prove that this still leads to an algorithm that correctly solves \kdts. When Algorithm~\ref{alg:improved-binary-search} returns a decision tree~$T$ (not $nil$), then this decision tree can have at most $s$ internal nodes and uses at most $d$ dimensions to make cuts. Thus $T$ is proof that there is a solution to \kdts\ parameterized by $s+d$. 
    
    \looseness = -1
    In the other direction, assume there is a solution $T^*$ for an instance $(E, \lambda, s)$ of \kdts.
    We show by induction over $s$ that \cref{alg:improved-binary-search} finds a decision tree of size at most~$s$.
    In the base case, where $s = 0$, the existence of $T^{*}$ shows that all examples in $E$ have the same label, which is also detected in Line~3 of \cref{alg:improved-binary-search}.
    Assume as an induction hypothesis that the statement holds for all instances of \kdts\ in which we seek for decision trees of size less than~$s$. 
    Consider the root $p^*$ of $T^*$ and let $s^*$ be the number of internal nodes of~$T^{*}$.
    Let $j^*$ be the number of internal nodes in the left subtree; then the number of internal nodes in the right subtree~$T^{*}_r$ of $T^{*}$ is necessarily $s^* - j^* - 1$.
    Since Algorithm~\ref{alg:improved-binary-search} loops through all dimensions and all ways of dividing the number of internal nodes over the left and right subtrees, there is one iteration of the loops in Line~7 and~8 in Algorithm~\ref{alg:improved-binary-search} where the chosen dimension is $\dimn(p^*)$, the chosen number of internal nodes for the left subtree equals $j^*$, and hence the right subtree should have at most $s^* - j^* - 1$ internal nodes.
    Decision tree $T^{*}_r$ witnesses that there is a decision tree for $E[f_{\dimn(p^{*})} \leq \thr(p^{*})]$ with at most $s^* - j^* - 1$ internal nodes.
    Thus, in Line~9 of \cref{alg:improved-binary-search}, \cref{alg:binary-search} will find a threshold $t \geq \thr(p^{*})$ because, by the induction hypothesis, the calls to \cref{alg:improved-binary-search} in \cref{alg:binary-search} in which $D[m] \leq \thr(p^{*})$ will return with a positive answer.
    Again by the induction hypothesis we will thus obtain decision trees of the corresponding sizes in Lines~10 and~11 of Algorithm~\ref{alg:improved-binary-search}.
    As we check in Line~\ref{alg:smallest} whether the found decision tree is smaller than the decision tree in another iteration, we will output a tree of size $s$ or smaller.
    Thus, Algorithm~\ref{alg:improved-binary-search} finds a decision tree that has at most~$s$ internal nodes.
    
    We now prove the running-time bound.
    Algorithms~\ref{alg:improved-binary-search} and~\ref{alg:binary-search} together build a recursion tree in which the nodes correspond to calls of Algorithm~\ref{alg:improved-binary-search}.
    In each node, corresponding to a call to Algorithm~\ref{alg:improved-binary-search} with some set $E$ and numbers $s$, $d$, there are at most $sd(2 + \log |E|) = sd\log(4|E|)$ recursive calls to Algorithm~\ref{alg:improved-binary-search}: For each iteration of the two loops in Algorithm~\ref{alg:improved-binary-search} there are two direct recursive calls and at most $\log |E|$ in Algorithm~\ref{alg:binary-search}.
    In each recursive call the parameter $s$ decreases by at least one.
    Hence, the overall size of the recursion tree is $O((sd\log(4|E|))^s)$.
    For each node $N$ of the recursion tree, we spend $O(|E|)$ time, as the main running time incurred by the call to Algorithm~\ref{alg:improved-binary-search} (corresponding to $N$) is in the uniformity check of $E$ (Line~\ref{alg:uniform} of \cref{alg:improved-binary-search}); the running time of the remaining bookkeeping tasks in Algorithm~\ref{alg:improved-binary-search} and~\ref{alg:binary-search} can be charged to the child nodes of $N$.%

    Finally, bounding $(\log(4|E|))^s$ uses the following well-known technique: We claim that $(\log m)^s = O(s^{2s} \cdot m^{1/s})$.
    To see this, observe that the claim is trivial for $m < s^{2s}$ (as then $(\log m)^s < (2s\log s)^s = s^s(2 \log s)^s \leq s^{2s}$).
    Otherwise, if $m \geq s^{2s}$, then we have $\log m \leq s^2 m^{1/s^{2}}$ because this holds for $m = s^{2s}$ (observe that dividing both sides by $s$ yields $2\log s \leq s^{1+ 2/s}$ which clearly holds) and the derivative wrt.\ $m$ of the left-hand side is at most as large as the derivative (wrt.\ $m$) of the right-hand side, that is, $1/(\ln(2) m) < m^{1/s^{2} - 1}$.
    Thus, in this case $(\log m)^s \leq s^{2s}m^{1/s}$, showing the claim.
    Substituting $4|E|$ for $m$ in $(\log m)^s = O(s^{2s} \cdot m^{1/s})$ we get the overall running time bound of $O((sd)^s \cdot s^{2s} \cdot |E|^{1 + 1/s}) = O((s^{3}d)^s \cdot |E|^{1 + o(1)})$.
\end{proof}

The strategy employed by Ordyniak and Szeider to solve \dts\ is to first find a subset~$\mathcal{D}$ of dimensions which should be cut to find a smallest decision tree~\cite{DBLP:conf/aaai/OrdyniakS21}.
Once the set~$\mathcal{D}$ has been determined, they use their Theorem~4 to find a smallest decision tree.
In our case, Algorithm~\ref{alg:improved-binary-search} works directly towards the final goal, both selecting dimensions to cut and finding a smallest decision tree at the same time.
We can adapt the algorithm to Ordyniak and Szeider's setting of cutting only within a specified set of dimensions by simply restricting the the dimensions to select in Line~7 in Algorithm~\ref{alg:improved-binary-search}, obtaining a more efficient running time.

\begin{corollary}
  Given a subset~$\mathcal{D}$ of dimensions, with $|\mathcal{D}| \leq s$, where we are allowed to cut only (a subset of) dimensions in~$\mathcal{D}$, \dts\ is solvable in $2^{O(s(\log{s}))} |E|^{1+o(1)}$ time.
\end{corollary}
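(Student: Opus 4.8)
The plan is to observe that the corollary follows almost verbatim from the proof of \cref{thm:improved-binary-search}, via a single modification to \cref{alg:improved-binary-search}: in the loop of Line~7, instead of letting $i$ range over all $d$ dimensions $[d]$, we let $i$ range only over the dimensions in~$\mathcal{D}$. Call the resulting procedure $\textsc{SmallestDecisionTree}_{\mathcal{D}}$, with \cref{alg:binary-search} unchanged. Everything else in the two algorithms stays the same.

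For correctness I would re-run the induction over $s$ from the proof of \cref{thm:improved-binary-search}. The only place where the full range of $i$ is used is in the inductive step, where we argue that some iteration of the loops picks $i = \dimn(p^{*})$ for the root $p^{*}$ of an optimal tree $T^{*}$. Since $T^{*}$ is required to cut only dimensions from~$\mathcal{D}$, we have $\dimn(p^{*}) \in \mathcal{D}$, so such an iteration still exists, and the remainder of the induction (including the \cref{alg:binary-search} step, which only relies on monotonicity of threshold feasibility) goes through unchanged. Conversely, any tree returned by $\textsc{SmallestDecisionTree}_{\mathcal{D}}$ cuts only dimensions in~$\mathcal{D}$, since those are the only dimensions ever considered, and still has at most $s$ internal nodes.

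For the running time I would redo the recursion-tree accounting of \cref{thm:improved-binary-search}: each node of the recursion tree now has at most $|\mathcal{D}| \cdot s \cdot (2 + \log |E|) \le s^{2}\log(4|E|)$ children, using the hypothesis $|\mathcal{D}| \le s$; and $s$ strictly decreases along every branch, so the recursion tree has $O\bigl((s^{2}\log(4|E|))^{s}\bigr)$ nodes, each processed in $O(|E|)$ time (the uniformity check dominates, as before). Applying the inequality $(\log m)^{s} = O(s^{2s} m^{1/s})$ established in the proof of \cref{thm:improved-binary-search}, with $m = 4|E|$, yields a total running time of $O\bigl((s^{2})^{s} \cdot s^{2s} \cdot |E|^{1 + 1/s}\bigr) = O\bigl(s^{4s}\cdot |E|^{1 + o(1)}\bigr) = 2^{O(s\log s)}\cdot |E|^{1+o(1)}$, as claimed.

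There is no deep obstacle here; the only point to handle carefully is the hypothesis $|\mathcal{D}| \le s$, which is precisely what makes the $d$-factor in the running time collapse into an additional $s$-factor rather than surviving. I would remark that this hypothesis is essentially without loss of generality, since any decision tree with at most $s$ internal nodes cuts at most $s$ distinct dimensions anyway. I would also note that if one instead insists on Ordyniak and Szeider's stronger requirement that \emph{all} of $\mathcal{D}$ be used, one can additionally carry along the set of already-cut dimensions (a subset of $\mathcal{D}$, hence one of at most $2^{|\mathcal{D}|} \le 2^{s}$ possibilities) and reject leaves that did not exhaust $\mathcal{D}$; this extra bookkeeping does not change the stated bound of $2^{O(s\log s)}\cdot |E|^{1+o(1)}$.
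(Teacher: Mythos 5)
Your proposal is correct and matches the paper's approach exactly: the paper likewise obtains the corollary by restricting the loop in Line~7 of \cref{alg:improved-binary-search} to the dimensions in~$\mathcal{D}$ and reusing the analysis of \cref{thm:improved-binary-search} with $|\mathcal{D}| \leq s$ in place of~$d$, yielding the $2^{O(s\log s)}|E|^{1+o(1)}$ bound. Your additional remarks on the ``all of $\mathcal{D}$ must be used'' variant go slightly beyond what the paper records but are consistent with it.
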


We now consider the parameter~$R$.
For simplicity, we will in the following assume that $R$ restricts the number of red leaves in a decision tree, and we assume that there are at least as many blue leaves as red leaves.
Observe that this is without loss of generality, because to solve the general case where $R$ bounds the smaller one of the number of red leaves or the number of blue leaves we may simply try both options.
Below, we call an internal node that has red leaves in both subtrees an \emph{essential} node.

\begin{lemma}\label{lem:r-1-branching}
    A minimum-size decision tree $T$ with $R$ red leaves has at most $R-1$ essential nodes.%
\end{lemma}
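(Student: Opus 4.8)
The plan is to argue about the tree of essential nodes induced inside $T$. Call a node of $T$ \emph{red-relevant} if the subtree rooted at it contains at least one red leaf. The red-relevant nodes form a subtree $T'$ of $T$ containing the root (assuming $T$ has at least one red leaf; if it has none there is nothing to prove). First I would observe that every leaf of $T'$ is a red leaf of $T$: a leaf of $T'$ is a node all of whose children (if any) are not red-relevant, but if it were an internal node of $T$ then at least one child would have to contain a red leaf — contradiction — so it is a leaf of $T$, and being red-relevant it is a red leaf. Hence $T'$ has exactly $R$ leaves.

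Next I would relate essential nodes to branching nodes of $T'$. An internal node $t$ of $T$ is essential iff both of its children are red-relevant, i.e., iff $t$ has two children in $T'$; so the essential nodes are precisely the nodes of $T'$ with two children in $T'$ (degree-two branching nodes of $T'$). For any rooted binary tree with $R$ leaves, the number of nodes with two children is exactly $R-1$: this is the standard fact, provable by induction on $R$ (each branching node splits into two subtrees whose leaf counts sum to $R$, and summing $(R_1-1)+(R_2-1)+1 = R-1$), or by a double-counting/Euler-type argument on the number of edges. Applying this to $T'$ gives at most $R-1$ essential nodes — "at most" rather than "exactly" only because $T'$ may have fewer than $R$ leaves is not the issue (it has exactly $R$), but because $T'$ could be a path if $R=1$, and in general the count is exactly $R-1$; stating it as $\le R-1$ is safe in all cases including $R=1$, where $T'$ has a single red leaf and zero essential nodes.

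I do not expect a real obstacle here; the only thing to be careful about is the definition bookkeeping — making sure "essential" (red leaves in \emph{both} subtrees) matches "two children in $T'$", and that leaves of $T'$ really are red leaves of $T$ so that $T'$ has exactly $R$ leaves. The minimum-size hypothesis on $T$ is not actually needed for this lemma's statement as written (the bound holds for any decision tree with $R$ red leaves), so I would either use it only implicitly or note that it is not required; if the authors intend it to rule out degenerate trees (e.g., a red-relevant node with a single child in $T'$ whose other child is a "useless" subtree of only red leaves that could be contracted), that contraction does not change the count of essential nodes anyway, so the argument is unaffected.
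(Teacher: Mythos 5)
Your proof is correct and follows essentially the same route as the paper's: you restrict attention to the subtree of nodes whose subtrees contain a red leaf, identify essential nodes with its branching nodes, and apply the standard fact that a binary tree with $R$ leaves has at most $R-1$ branching nodes (the paper phrases this via pruning blue-only subtrees and contracting degree-2 nodes). Your observation that minimality of $T$ is not needed is also accurate.
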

\begin{proof}
  \looseness=-1
    Consider the subtree~$T'$, whose root is the essential node closest to the root of~$T$. There is only one such node, as either the root of~$T$ is essential, or one of its subtrees contains only blue leaves. %
    Remove from~$T'$ all maximal rooted subtrees that contain only blue leaves (possibly consisting of only a single blue leaf).  The resulting tree contains degree-2 nodes, which must be non-essential.
    Contracting the degree-2 nodes, we again obtain a binary tree~$T^*$.
    Note that the internal nodes of $T^{*}$ one-to-one correspond to the essential nodes of~$T$.
    Tree~$T^*$ still has all $R$ red leaves of~$T$ (and no blue ones), and thus consists of at most $R-1$ internal nodes.%
\end{proof}

\begin{lemma}\label{lem:2d-non-branching}
    In a minimum-size decision tree $T$ with $R$ red leaves, each root-to-leaf path has at most $2d$ consecutive non-essential nodes, where $d$ is the number of dimensions.
\end{lemma}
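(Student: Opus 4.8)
Throughout, a \emph{non-essential node} means an internal node that is not essential, i.e., one having a child whose subtree contains no red leaf. The plan is a minimality-driven exchange argument: I will attach to every non-essential node a \emph{signature} drawn from a set of only $2d$ values, and then show that no two consecutive non-essential nodes on a root-to-leaf path can share a signature. First I would record two normalization facts about a minimum-size decision tree~$T$ for $(E,\lambda)$: (a)~every internal node of~$T$ has an example on each side of its cut, since otherwise the cut is vacuous and contracting it yields a strictly smaller tree; and (b)~any subtree of~$T$ all of whose leaves are blue is a single blue leaf, since otherwise all examples reaching its root are blue and the whole subtree can be replaced by one leaf. In particular, by~(b) a non-essential node~$t$ has exactly one child that is a blue leaf, which I denote $b(t)$, while its other child $c(t)$ roots a subtree containing a red leaf.

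Now fix a root-to-leaf path~$P$ and consecutive non-essential nodes $t_1,\dots,t_m$ along~$P$, with $t_{i+1}$ the child of~$t_i$ on~$P$. For every $i<m$ the node $t_{i+1}$ is internal, hence different from the leaf $b(t_i)$, so $P$ descends from~$t_i$ into $c(t_i)$. Writing $\delta_i:=\dimn(t_i)$ and $\theta_i:=\thr(t_i)$, assign to $t_i$ the signature $(\delta_i,\varepsilon_i)$, where $\varepsilon_i$ records whether $b(t_i)$ is the ``$\le$''-child or the ``$>$''-child of~$t_i$. Since there are only $2d$ possible signatures, it suffices to prove that $t_1,\dots,t_m$ receive pairwise distinct signatures.

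Suppose for contradiction that $i<j$ with $t_i$ and $t_j$ of the same signature; assume $b(t_i),b(t_j)$ are the ``$>$''-children (the other case is symmetric, with all inequalities reversed), and write $\delta$ for the common dimension. Because $i\le m-1$, path~$P$ leaves~$t_i$ through $c(t_i)$, its ``$\le$''-child, so every example in $E[t_j]$ has $\delta$-coordinate at most~$\theta_i$. Moreover $E[b(t_j)]=\{x\in E[t_j]\mid x[\delta]>\theta_j\}$ is nonempty by~(a), forcing $\theta_j<\theta_i$. I now reset $\thr(t_i):=\theta_j$. The only leaf whose example set can grow is $b(t_i)$, which now additionally receives the examples $x\in E[t_i]$ with $\theta_j<x[\delta]\le\theta_i$; the crux is that each such~$x$ is blue. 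To see this, trace~$x$ in the original~$T$: since $x[\delta]\le\theta_i$ and $x\in E[t_i]$ we have $x\in E[c(t_i)]=E[t_{i+1}]$, so there is a largest index $\ell\in\{i+1,\dots,j\}$ with $x\in E[t_\ell]$. If $\ell=j$, then $\dimn(t_j)=\delta$ and $x[\delta]>\theta_j$ route~$x$ into the ``$>$''-child of~$t_j$, which is $b(t_j)$. If $\ell<j$, then $x\notin E[t_{\ell+1}]$ by maximality, so at~$t_\ell$ the example~$x$ descends into the child other than $t_{\ell+1}$; as $t_{\ell+1}$ is internal this child is $b(t_\ell)$. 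Either way $x$ lands in a blue leaf, hence is blue, and the enlarged $b(t_i)$ is still consistent (every other leaf only loses examples). But after the reset every example at~$t_j$ has $\delta$-coordinate at most~$\theta_j$, so $b(t_j)$ becomes empty, and contracting~$t_j$ gives a strictly smaller decision tree for $(E,\lambda)$ --- contradicting minimality. Hence $t_1,\dots,t_m$ have pairwise distinct signatures, so $m\le 2d$.

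The routine verifications I would not spell out are that contracting a vacuous internal node and replacing a monochromatic subtree by a leaf both preserve being a decision tree, and the symmetric treatment of the ``$\le$'' case. The step I expect to be the genuine obstacle is the crux above: showing that widening the blue leaf $b(t_i)$ only absorbs blue examples. This is exactly where the signature must track the \emph{side on which the blue leaf sits} rather than the side taken by the path, and where the inequality $\theta_j<\theta_i$ is essential --- it is what forces every newly absorbed example to drop out of~$P$ into one of $b(t_{i+1}),\dots,b(t_j)$ before it could descend past~$t_j$. A minor point deserving care is that one only ever compares pairs $i<j\le m$, so the node~$t_i$ in question always satisfies $i\le m-1$; this is why ``$P$ leaves~$t_i$ through $c(t_i)$'' is available, even though the path might ultimately plunge into~$b(t_m)$ at the bottom of the run.
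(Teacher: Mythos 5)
Your proof is correct and follows essentially the same strategy as the paper: pigeonhole the consecutive non-essential nodes by (dimension, side of the blue leaf) into $2d$ classes, then run a local exchange on two nodes in the same class to contradict minimality. The only difference is the surgery --- the paper deletes the \emph{shallower} of the two matching nodes and lets its blue examples fall into the deeper node's blue leaf, whereas you lower the shallower node's threshold so that the deeper node's blue leaf empties and that node can be contracted --- and your derivation of $\theta_j<\theta_i$ from non-emptiness of $E[b(t_j)]$ is, if anything, cleaner than the paper's detour through three nodes sharing a dimension.
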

\begin{proof}
    Assume for a contradiction that a minimum-size decision tree~$T$ exists for example set~$E$ that has $2d+1$ consecutive non-essential nodes on a root-to-leaf path~$p$. Let $n_0$ be the essential node that is the child of the $2d+1$-th consecutive non-essential node in~$p$. Since there are $2d+1$ non-essential nodes, there must be three nodes $n_1 = (i,t_1)$, $n_2 = (j,t_2)$, and $n_3 = (l, t_3)$ such that $i=j=l$. Additionally, at least two of those nodes have a blue leaf as their right (or left) child. Assume w.l.o.g. that $n_1$ and $n_2$ have a blue leaf as their right child, $t_1<t_2$, and either $t_3<t_1$ or $t_2<t_3$. Thus, $n_1$ is closer to $n_0$ than $n_2$ on the root-to-leaf path~$p$ (see Figure~\ref{fig:path-of-blue}).
    
    \begin{figure}[t]
        \centering
        \begin{minipage}[t][200pt][b]{0.5\textwidth}
            \includegraphics{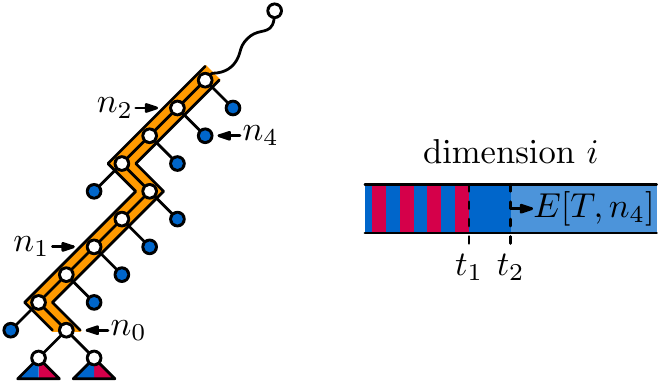}
            \caption{Construction of Lemma~\ref{lem:2d-non-branching}: path~$p$ in yellow, and nodes $n_1$ and $n_2$ respectively have thresholds $t_1$ and $t_2$ in dimension $i$.}
            \label{fig:path-of-blue}
        \end{minipage}
        \qquad
        \begin{minipage}[t][200pt][b]{0.35\textwidth}
            \centering       
            \includegraphics{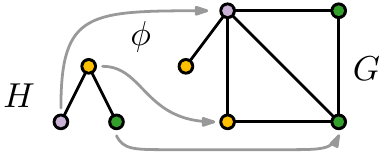}
            \caption{A subgraph isomorphism~$\phi$ (in gray) is a mapping from the vertices of $H$ to vertices of $G$.}
            \label{fig:isomorphism}
        \end{minipage}
    \end{figure}
    
    Now consider a decision tree~$T^*$ that is identical to $T$, except it does not contain $n_2$, nor the blue leaf~$n_4$ attached at $n_2$ (as its right child) --- the parent of $n_2$ is directly connected to the internal node that is the left child of $n_2$. The node $n_0^*$ in $T^*$, corresponding to $n_0$ in $T$, is unaffected by this change, meaning that $E[T,n_0]=E[T^*,n_0^*]$, for the following reason. All blue examples in $E[T,n_4]$ which follow the root-to-leaf path to $n_1^*$ in $T^*$ (corresponding to $n_1$ in $T$), will not reach $n_0^*$, since in dimension $i=j$ each such example $e$ has a coordinate $c_e$, for which holds that $t_1<t_2<c_e$. Thus, such examples belong in the leaf node connected to $n_1^*$.
    As a result, $T^*$ must be a smaller decision tree for~$E$ than $T$, contradicting the assumption that~$T$ is a minimum-size decision tree.
\end{proof}
\looseness=-1
Lemmas~\ref{lem:r-1-branching} and~\ref{lem:2d-non-branching} together show that a minimum size decision tree has at most $2d$ non-essential nodes before each essential node and each red leaf. Thus such a tree has at most $2d(2R-1)$ internal nodes. We can therefore apply Theorem~\ref{thm:improved-binary-search} to prove that \dts\ is FPT with $d$ and $R$ as parameters.

\begin{theorem}\label{thm:fpt-dim-leaves}
\dts\ is solvable in $O((s^3d)^{s}|E|^{1+o(1)})$ time, with $s=2d(2R-1)$, and hence \dts\ is FPT parameterized by $d+R$.
\end{theorem}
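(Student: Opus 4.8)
The plan is to use the two structural lemmas to argue that the decision trees we care about are already of size $O(dR)$, and then to invoke \cref{thm:improved-binary-search} with this bound in the role of~$s$.

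First I would combine \cref{lem:r-1-branching} and \cref{lem:2d-non-branching}. Fix a minimum-size decision tree~$T$ among those having at most $R$ leaves of the minority class, which (as stipulated) we take to be red. By \cref{lem:r-1-branching} it has at most $R-1$ essential nodes. For the non-essential nodes, the key observation is that in a minimum-size tree every non-essential internal node has a single blue leaf as one of its two children: its red-free subtree cannot contain an internal node, since an all-blue subtree with an internal node could be replaced by a single blue leaf, contradicting minimality. Consequently the non-essential nodes decompose into maximal downward chains obtained by repeatedly following the non-blue-leaf child, and each such chain ends at an essential node or at a red leaf; since the parent of such an endpoint is unique, distinct chains end at distinct endpoints, so there are at most $(R-1)+R=2R-1$ chains, and by \cref{lem:2d-non-branching} each has at most $2d$ nodes. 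Hence $T$ has at most $2d(2R-1)$ non-essential nodes, and in total $s:=2d(2R-1)=O(dR)$ internal nodes (the essential nodes contribute only a lower-order additive term that does not affect the asymptotics).

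Having fixed $s=2d(2R-1)$, I would run \cref{alg:improved-binary-search} on the instance with this $s$ and the actual number~$d$ of dimensions, trying each of the two classes in the role of the minority class. By \cref{thm:improved-binary-search} this computes, in $O((s^3d)^s|E|^{1+o(1)})$ time, a decision tree of minimum size among all decision trees of size at most~$s$, or reports that there is none. The previous paragraph shows that whenever an admissible decision tree exists, a minimum-size admissible one has at most~$s$ internal nodes; therefore \cref{alg:improved-binary-search} returns a tree exactly when one exists, and the tree it returns is of minimum size. Substituting $s=2d(2R-1)$ into the running time gives $(dR)^{O(dR)}\cdot|E|^{1+o(1)}$, which is of the form $f(d+R)\cdot\poly(|E|)$, so \dts\ is FPT parameterized by $d+R$.

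The routine parts are the running-time substitution and the chain count; the step needing the most care is the interface with \cref{thm:improved-binary-search}, which is oblivious to the leaf bound~$R$ and merely minimizes size subject to the node budget~$s$. One must justify that this suffices, which it does precisely because we have shown that an optimum tree with at most~$s$ nodes exists as soon as any admissible tree does. If one additionally insists that the returned tree itself respect the bound of $R$ red leaves, \cref{alg:improved-binary-search} can be augmented to carry, for each partial solution, its current number of red leaves and to prune branches that exceed~$R$; this changes neither the depth of the recursion nor the asymptotic running time.
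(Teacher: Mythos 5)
Your proof follows the paper's argument exactly: combine \cref{lem:r-1-branching} and \cref{lem:2d-non-branching} to bound the number of internal nodes of a minimum-size tree with $R$ red leaves by $O(dR)$, then invoke \cref{thm:improved-binary-search} with that bound as the node budget. Your explicit chain decomposition of the non-essential nodes fleshes out what the paper only sketches (``at most $2d$ non-essential nodes before each essential node and each red leaf''), and your remark that the essential nodes add a lower-order additive term (so the count is really $2d(2R-1)+R-1$) correctly flags a small imprecision that is also present in the paper and does not affect the stated asymptotics.
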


\section{Running-time lower bound%
}
We now prove the following lower bound for computing decision trees.
\begin{theorem}\label{thm:whardness}
  \dtslong\ (\dts) is W[1]-hard with respect to the number~$d$ of dimensions.
  Assuming the \ethlong\ there is no algorithm solving \dts\ in time $f(d)n^{o(d/\log d)}$ where $n$ is the input size and $f$ is a computable function.
\end{theorem}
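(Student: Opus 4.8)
The plan is to establish both statements at once via a parameterized reduction from \subisolong\ (\subiso). Recall the problem: given a pattern graph~$H$, a host graph~$G$, and a coloring $\col\colon V(G)\to V(H)$, decide whether there is an injective map $\phi\colon V(H)\to V(G)$ with $\col(\phi(u))=u$ for every $u\in V(H)$ and $\phi(u)\phi(v)\in E(G)$ for every $uv\in E(H)$; see \cref{fig:isomorphism}. By a theorem of Marx, \subiso\ parameterized by the number $k:=|E(H)|$ of pattern edges is W[1]-hard and, unless the \ethlong\ fails, cannot be solved in $f(k)\cdot N^{o(k/\log k)}$ time, where $N$ denotes the instance size; moreover one may assume that $H$ is connected and has no isolated vertex, so that $|V(H)|=O(k)$.

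From such an instance I would build a \dts\ instance $(E,\lambda,s)$ whose feature space has one dimension for each vertex of~$H$ and a constant number of dimensions for each edge of~$H$, so that $d=O(|V(H)|+|E(H)|)=O(k)$. The dimension attached to a pattern vertex~$v$ acts as a \emph{selection gadget}: its examples are placed so that any decision tree meeting the size bound is forced to make a cut in that dimension which isolates a single coordinate value, to be read as the chosen image $\phi(v)$ among the $G$-vertices of color~$v$. The dimensions attached to a pattern edge~$uv$ form a \emph{verification gadget}: their examples, together with their labels, encode the adjacency between the color classes of~$u$ and~$v$ in~$G$, and are arranged so that the cuts needed there fit into the budget exactly when the values selected for~$u$ and~$v$ form an edge of~$G$. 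The target size~$s$ is set to the number of cuts used by the intended trees; it is polynomial in~$N$ and grows with the instance, so in particular it is not bounded by any function of~$d$, consistent with the earlier remark that in this construction the optimal tree size is unbounded.

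Correctness is argued in two directions. Given a colored subgraph isomorphism~$\phi$, one assembles the intended tree of size exactly~$s$ by performing, gadget by gadget, the cuts prescribed by the values $\phi(v)$; consistency of~$\phi$ on shared endpoints makes every verification gadget satisfiable within its share of the budget. Conversely, given a decision tree of size at most~$s$, the key point is that the budget is tight: a counting argument over the gadgets shows that the tree must spend essentially all of its cuts on realizing exactly one selection per vertex dimension and the mandatory cuts of the verification gadgets, leaving no slack to repair an inconsistent choice; reading~$\phi$ off from the selection cuts then yields a valid colored subgraph isomorphism. Finally, since $d=O(k)$ and both $s$ and $|E|$ are polynomial in~$N$, an algorithm for \dts\ running in $f(d)\cdot|E|^{o(d/\log d)}$ time would solve \subiso\ in $f'(k)\cdot N^{o(k/\log k)}$ time, contradicting the \ethlong; and since the reduction runs in polynomial time with~$d$ bounded by a function of~$k$ alone, it is a parameterized reduction, so W[1]-hardness of \subiso\ in~$k$ gives W[1]-hardness of \dts\ in~$d$.

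The main obstacle I expect is the gadget design together with proving the tightness of~$s$: a decision tree may cut any dimension at any threshold in any order, so one must rule out that some clever global cut order lets a tree of size at most~$s$ avoid committing to a single value per vertex dimension, or resolve a forbidden pair using fewer cuts than a legal one --- all while keeping the dimension count \emph{linear} in~$k$, since a super-linear blow-up would only yield an $n^{o(\sqrt{d})}$-type bound instead of the claimed $n^{o(d/\log d)}$. Making the selection and verification gadgets interact so that precisely the colored subgraph isomorphisms survive the size budget is where the real work lies.
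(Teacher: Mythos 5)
Your high-level plan coincides with the paper's: reduce from \subiso, use Marx's $f(k)\cdot N^{o(k/\log k)}$ lower bound for the parameter $k=|E(H)|$, keep the number of dimensions linear in $k$, and set a tight cut budget~$s$. However, the proposal stops exactly where the proof begins: the gadgets and the tightness argument, which you yourself flag as ``where the real work lies,'' are the entire content of the theorem, and what little you do specify points in a direction that does not obviously work. You propose that each vertex dimension carries a cut ``which isolates a single coordinate value'' to be read as $\phi(v)$; but a single axis-parallel cut in one dimension only separates a prefix of the coordinate values from a suffix, so it cannot isolate one value among $n$ candidates. The paper inverts the roles: the \emph{choice} is made in a two-dimensional edge-selection subspace per edge of $H$, where the budget suffices to separate all but exactly one red--blue ``edge pair'' (the uncut pair encodes the chosen edge of $G$), and the one-dimensional vertex subspaces only \emph{verify} consistency with a single cut each, relying on all non-chosen vertex pairs having already been separated elsewhere. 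Also missing is the basic mechanism for forcing cuts at all: the paper plants pairs of oppositely labeled examples that agree in every coordinate except one (``dummy tuples''), which compel $m_G+3$ cuts between consecutive edge pairs and make the budget $s=(m_G+4)(m_G-m_H)+n_H$ tight.

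The second, subtler omission is in the reverse direction. A decision tree's cuts are not global hyperplanes --- each cut acts only inside the region of its node --- so a ``counting argument over the gadgets'' is not enough to rule out a tree that interleaves cuts across subspaces to save budget. The paper handles this by choosing, among all size-$s$ solutions, one minimizing the number of cuts in vertex-verification subspaces, and then analyzing the lowest common ancestor of two nodes that separate the primary examples of two distinct ``chosen'' edges of the same color pair; the dummy tuples lying between those edges then force $m_G+1$ extra cuts, exceeding the budget. Without this (or an equivalent) exchange argument, the claim that ``no slack'' remains is unsubstantiated. So the proposal is the right strategy but not yet a proof.
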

\looseness=-1
The remainder of this section is devoted to the proof of \cref{thm:whardness}.
Below, for a graph $G$ we use $V(G)$ to denote its vertex set and $E(G)$ for its edge set.
We give a reduction from the \subisolong~(\subiso) problem.
Its input consists of a graph $G$ with a proper $K$-coloring~$\col \colon V(G) \to [K]$, such that no two adjacent vertices share a color, and a graph~$H$ with vertex set $[K]$ that has no isolated vertices.   
The question is whether $H$ is isomorphic to a subgraph of~$G$ that respects the colors, i.e., whether there exists a mapping~$\phi \colon V(H) \to V(G)$ such that (i) each vertex of $H$ is mapped to a vertex of~$G$ with its color, i.e., $\col(\phi(c)) = c$ for each~$c \in V(H)$ and (ii) for each edge $\{c,c'\}$ in~$H$, $\{\phi(c),\phi(c')\} \in E(G)$ is an edge of~$G$. See Figure~\ref{fig:isomorphism} for an example of such a mapping.
In that case, we also say that $\phi$ is a \emph{subgraph isomorphism} from $H$ into~$G$.
In the following, we let $m_G = |E(G)|$, $n_H = |V(H)|$ and $m_H = |E(H)|$.
Observe that $n_H \leq 2m_H$ since $H$ has no isolated vertices.
For each color $k \in [K]$, we denote by $V_k = \{v \in V(G) \mid \col(v) = k\}$ the vertices of~$G$ with color~$k$.
We assume without loss of generality that there is $n \in \mathbb{N}$ such that for all $k \in [K]$ we have $|V_k| = n$ (otherwise add additional isolated vertices to~$G$ as needed) and that if there is no edge in $H$ between two vertices $u, v \in V(H)$, then there are no edges between $V_u$ and $V_v$ in~$G$.

\looseness=-1
Since \subiso\ contains the \textsc{Multicolored Clique} problem~\cite{fellows_parameterized_2009} as a special case, \subiso\ is W[1]\nobreakdash-hard with respect to~$m_H$.
Moreover, Marx~\cite[Corollary 6.3]{marx_can_2010} observed that an $f(m_H)\cdot n^{o(m_H/\log m_H)}$-time algorithm for \subiso\ would contradict the \ethlong.
Our reduction will transfer this property to \dts\ parameterized by the number of dimensions.

\emph{Outline.} Given an instance $(G, H)$ of \subiso\ we now describe how to construct an equivalent instance $(E, \lambda)$ of \dts. 
Our construction consists of two types of gadgets.
First, for each edge $\{c,c'\}\in E(H)$, we use a two-dimensional \emph{edge-selection subspace} to model the choice for an edge $\{u,v\}\in E(G)$ with $\col(u)=c,\col(v)=c'$.
Second, for each vertex $c\in V(H)$, we use a one-dimensional \emph{vertex-verification subspace} to check whether the chosen edges with an endpoint of color $c$ consistently end in the same vertex $u\in V(G)$ with $\col(u)=c$.
Furthermore, we classify examples in our construction into two types: \emph{primary} and \emph{dummy} examples. We use primary examples to model vertices and edges in $G$, while dummy examples are used only to force certain cuts in the constructed instance.

\looseness=-1 We first describe the constructed instance $(E, \lambda)$ of \dts\ by giving the labeled point sets that we obtain when projecting the examples in $E$ to the edge-selection and vertex-verification subspaces.
Later, we define the examples in $E$ by giving the points they project to in each of the subspaces.
We specify labels for most of these points, and primary examples may project only to points with a matching label (red or blue), whereas dummy examples can project to any point.
In each subspace there will be several points that will be used by examples in order to achieve the correct behavior of each gadget.
On the other hand, most examples will play a role only in very few subspaces and the other dimensions shall not be relevant for them.
To achieve this property, we reserve in each subspace one unlabeled point (usually with the minimum or the maximum coordinate) that can be used by all examples that shall not be separated from each other in this specific subspace.
The vertex-verification subspaces have a second unlabeled point that can only be used by dummy vertices.
We call an example that projects to the unlabeled point of some subspace \emph{irrelevant} for this subspace and conversely, the subspace is irrelevant for this example.

We need some more tools to describe the points in the edge-selection and vertex-verification subspaces:
In one-dimensional subspaces, the precise coordinates of the points do not matter and we rather specify their order.
The main ingredient in the construction are pairs of a red and a blue point that need to be separated:
An \emph{$rb$-pair} is a pair~$(r,b)$ of points that are consecutive in the linear order with the red point preceding the blue point.
To avoid that $rb$-pairs interfere with each other, we separate them with forced cuts.
To achieve this, we use what we call dummy tuples.
A \emph{dummy tuple} consists of $2(m_G + 2)$ points (\emph{dummy points}) to which only dummy examples can project.
The first two are red, the second two are blue, and so on, and the last two are blue (without loss of generality, we assume that $m_G$ is even); see Figure~\ref{fig:vertex-selection-b}a.
Dummy tuples are placed between consecutive $rb$-pairs.
We later project dummy examples to the points in a dummy tuple so to ensure the following two properties.
First, only examples with colors matching the respective point in the tuple can project to such a point.
Second, these examples force $m_G + 3$ cuts in the corresponding subspace as follows.
A number of $m_G + 1$ cuts must be placed between each pair of equally colored and adjacent middle dummy points, since we ensure that the examples that project here differ only in the subspace of this dummy tuple and in no other subspace.
Additionally, two cuts must be placed between the outer dummy points and the adjacent $rb$-pair, since we ensure that the dummy examples that project to the outer points differ from an example projecting to the neighboring point in the $rb$-pair only in the subspace of the dummy tuple.

\looseness=-1
\emph{Edge-selection.}
We now describe a two-dimensional edge-selection subspace $S_e$ for an edge $e$ of $H$, see Figure~\ref{fig:vertex-selection-b}b for an illustration.
We refer to the two dimensions of $S_e$ by \emph{$x$ and $y$-dimension of $S_e$}.
Except for the unlabeled point, each point~$p$ has coordinates of the form $(c_p,c_p)$, and we therefore simply specify the linear point order.
Let~$e_1, e_{2}, \dots,e_j$ denote the edges of $G$ whose endpoints have the same colors as $e$.
For each edge~$e_i$, we place an $rb$-pair called \emph{$e_i$'s edge pair}.
Between any two edge pairs we put a dummy tuple.
We place an unlabeled point whose $x$-coordinate is smaller than that of any other point and whose $y$-coordinate is larger than that of any other point.
We allocate a budget of~$(j-1) \cdot (m_G + 4)$ cuts that shall be used for performing cuts in these two dimensions.
The idea is that, by using $(j-1) \cdot (m_G + 3)$ cuts, it is possible to have a cut between every edge pair and the dummy tuples adjacent to it ($2(j-1)$ cuts) and the inner pairs of each dummy tuple ($(j - 1) \cdot (m_G + 1)$ cuts), and the remaining $j-1$ cuts can then be used to cut all but one of the edge pairs, which corresponds to choosing the edge whose edge pair is not cut.
Conversely, for each edge~$e$ of~$H$, there is a decision tree of size~$(j-1) \cdot (m_G + 4)$ for the points in the subspace, for which only the example set of a single leaf contains both red and blue points, and it contains precisely the edge pair of the edge~$e_i$ of~$G$ and the unlabeled point; see Figure~\ref{fig:vertex-selection-b}b.

\begin{figure}
  \centering       
  \includegraphics{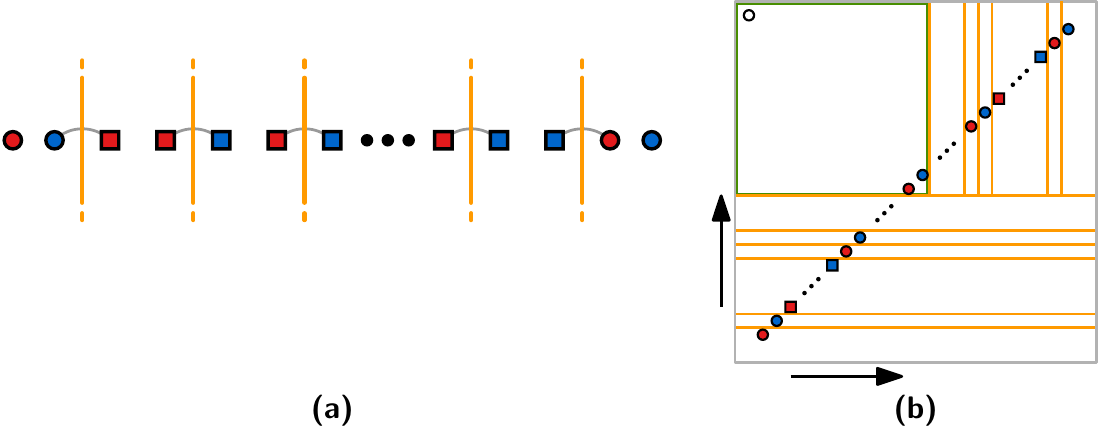}
  \caption{\textbf{\textsf{(a)}} Two $rb$-pairs (disks) and a dummy tuple (squares) between them, along with cuts forced by the dummy tuple. Examples projecting to connected points differ only in this dimension. \textbf{\textsf{(b)}} An edge-selection subspace. Consecutive red-blue pairs are shown as disks, while dummy examples are squares. The unlabeled point is shown in white. The yellow cuts show how one pair will be left unseparated, and cuts can be placed such that this pair is not separated from the unlabeled point either.}
  \label{fig:vertex-selection-b}
\end{figure}
 
\looseness=-1
\emph{Vertex-verification.}
Next, we describe a vertex-verification subspace for a vertex~$v$ of $H$.
The vertex-verification subspace of~$v$ is one-dimensional, and we again describe its projection by giving the order of the labeled points.
We first place a left unlabeled point, then one $rb$-pair, called a \emph{vertex pair}, for each vertex of $G$ whose color is $v$, and then a right unlabeled point; see Figure~\ref{fig:construction-example}.
We allocate a budget of a single cut for each vertex-verification space.
This budget allows to place a cut that separates one vertex pair as well as the left unlabeled and the right unlabeled point.
The idea is that all but the vertex pair that corresponds to the vertex of~$G$ that has been selected shall be separated by cuts in the edge-selection subspaces, and therefore a single cut suffices in the vertex-verification subspace.

\emph{Synthesis and examples.}
We now describe the instance~$(E, \lambda, s)$ of \dts\ that we construct for a given instance $(G,H)$ of {\sc Partitioned Subgraph Isomorphism}.
Our examples are elements of a space that contains $m_H$ two-dimensional edge-selection subspaces and $n_H$ one-dimensional vertex-verification subspaces, i.e., our points are in dimension~$d=2m_H+n_H$.
According to the budgets of cuts for the subspaces given above, we put the upper bound $s$ on the size of the desired decision tree to be $s = (m_G + 4) \cdot (m_G - m_H)+n_H$.%

\looseness=-1
Our construction contains two red and two blue primary examples for each edge of $G$.
For each edge $e=\{u,v\}$ of $G$, let~$S_{uv}$ denote the edge-selection subspace corresponding to the edge $\{\col(u),\col(v)\}$ of $H$ and let~$S_u$ and $S_v$ denote the vertex-verification subspaces corresponding to~$\col(u)$ and~$\col(v)$, respectively.
We create two primary example pairs~$U$ and~$V$, each consisting of a red and a blue example, which project to the vertex pair corresponding to~$u$ and~$v$ in~$S_u$ and in~$S_v$, respectively.
They both project to the edge pair of $e$ in~$S_{uv}$.
In all other dimensions, these pairs project to the (right) unlabeled point.  This finishes the construction of our primary examples. 

We now describe the dummy examples.
We create for each dummy tuple~$D$ contained in an edge-selection subspace~$S$ a number of $2(m_G + 1) + 1$ pairs of examples $L_1, L_2, \ldots, L_{m_G + 1}, R_1, R_2, \ldots, R_{m_G + 1}, P$ that each consist of a red and a blue dummy example.
In the subspace~$S$, the pairs~$L_i$ and~$R_i$ project to pairs of adjacent red and blue points in the middle of $D$.
In all other edge-selection subspaces, they project to the unlabeled point.
In each vertex-verification subspace, the pairs $L_i$ and $R_i$ project to the left and the right unlabeled point, respectively.
The red example of the pair~$P$ projects in~$S$ to the outer red point of $D$, and it coincides in all other subspaces with some fixed blue primary example~$b$ that projects to the blue point preceding the outer red point of $D$ in~$S$.
Likewise, the blue example of~$P$ projects in~$S$ to the outer blue point of~$D$, and it coincides in all other subspaces with some fixed red primary example~$r$ that projects to the red point succeeding the outer red point of $D$ in~$S$.
Observe that the examples of~$P$ can be separated from $r$ and~$b$ only in subspace~$S$, and therefore force the presence of two cuts.
Similarly, each of $L$ and~$R$ and the remaining examples can be separated only in~$S$.

Finally, we create for each vertex-verification subspace~$S$ one dummy pair~$U$ whose red and blue examples project to the left and to the right unlabeled point in~$S$, respectively.
In all other dimensions, they project to the (right) unlabeled point.
A key technical point is that the pair~$U$ enforces at least one cut in~$S$.
However, if we separate~$U$ by some cut~$C$ before separating both of the pairs~$L$ and~$R$ of some dummy tuple~$D$, then~$L$ and~$R$ end up on different sides of~$C$, thereby increasing the necessary number of cuts in the edge-selection subspace of~$D$.

\begin{figure}
  \centering
  \includegraphics{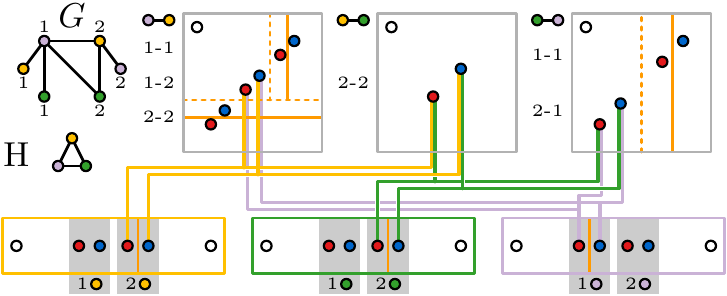}
  \caption{An instance $(E,\lambda)$ for graphs~$G$ and~$H$. The gray squares show edge-selection subspaces for the each color combination. The combinations and indices of colored vertices in $G$ are shown left of each subspace. The colored rectangles show vertex-verification dimensions for the corresponding color, with vertex pairs per index having gray backgrounds. Unlabeled points are shown in white. Edges connecting points in different subspaces correspond to examples of vertices in~$H$, and show the projection of those examples in the respective subspaces. The yellow cuts correspond to a solution to \dts in $(G,H,\col)$, with dashed cuts indicating multiple cuts through (omitted) dummy tuples.}
  \label{fig:construction-example}
\end{figure}

\begin{lemma}
  The instance $(G,H,\col)$ of \subiso\ is a yes-instance, if and only if the instance~$(E,\lambda,s)$ of \dts\ is a yes-instance.
\end{lemma}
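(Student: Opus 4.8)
The plan is to establish the two directions of the equivalence separately, in each case tracking the cut budget $s = (m_G + 4)(m_G - m_H) + n_H$ subspace by subspace. The key accounting observation is that the budget decomposes exactly: the $m_H$ edge-selection subspaces are collectively allotted $\sum_e (j_e - 1)(m_G + 4)$ cuts, where $j_e$ is the number of $G$-edges with the same color pair as $e$; since $\sum_e j_e = m_G$ and there are $m_H$ edges of $H$, this sum is $(m_G - m_H)(m_G + 4)$. The $n_H$ vertex-verification subspaces get one cut each, for a total of $n_H$. So any decision tree of size $\le s$ must spend \emph{exactly} this budget in each subspace, because (by the dummy-tuple construction) each subspace independently forces at least its allotted number of cuts, and these forced cuts in different subspaces are on pairwise "orthogonal" examples so the counts add. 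I would first prove this rigidity lemma: in any size-$\le s$ decision tree, each edge-selection subspace $S_e$ receives exactly $(j_e-1)(m_G+4)$ cuts and each vertex-verification subspace receives exactly one cut; moreover the forced cuts in $S_e$ (separating every dummy tuple's middle pairs and outer points) consume $(j_e-1)(m_G+3)$ of them, leaving exactly $j_e - 1$ "free" cuts in $S_e$, and in a vertex-verification subspace the single cut is forced by the pair $U$.

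For the forward direction, suppose $\phi \colon V(H) \to V(G)$ is a color-respecting subgraph isomorphism. I would build the decision tree top-down: in each edge-selection subspace $S_e$ with $e = \{c,c'\}$, place the $2(j_e - 1)$ cuts flanking the dummy tuples, the $(j_e-1)(m_G+1)$ cuts through dummy-tuple interiors, and then the $j_e - 1$ free cuts so as to separate off every edge pair \emph{except} the one corresponding to $\{\phi(c),\phi(c')\}$ — which is an edge of $G$ by property (ii), so this pair exists. This leaves, in $S_e$, a single leaf-region containing the surviving edge pair together with the unlabeled point. In each vertex-verification subspace $S_c$, place the one cut just past the vertex pair of $\phi(c)$, separating that pair (and the two unlabeled points) to one side. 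Now I must check every leaf is monochromatic. The surviving edge pair of $\{\phi(c),\phi(c')\}$ in $S_{cc'}$ consists of the primary examples $U,V$ for that $G$-edge; their red members project in $S_c$ and $S_{c'}$ to the vertex pairs of $\phi(c)$ and $\phi(c')$, which have been isolated there, so the red example ends in a red leaf and likewise the blue one — consistency because $\phi$ picks a single vertex per color. Every other primary edge pair and every dummy $rb$-pair has been explicitly separated. The dummy pair $U$ of $S_c$ is separated by the single cut there. One must verify no red/blue primary pair is re-merged: primary examples for distinct $G$-edges differ in some subspace where they project to distinct labeled points that have been cut apart. The total cut count is exactly $s$.

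For the reverse direction, take a size-$\le s$ decision tree $T$; by the rigidity lemma it spends exactly $j_e - 1$ free cuts in each $S_e$ and one cut in each $S_c$. In $S_e$ the forced cuts already isolate all dummy structure, and $j_e - 1$ further cuts among $j_e$ edge pairs can leave at most — hence exactly — one edge pair not yet separated from its neighbours; that pair, say for $G$-edge $\{u_e, v_e\}$, must be monochromatic-izable only if it gets separated from the red/blue imbalance, so in fact the surviving region pairs it with the unlabeled point and the red and blue examples of that pair must be routed to correct leaves \emph{via other subspaces}. Define $\phi$ by: for color $c$, let $\phi(c)$ be the vertex of $G$ whose vertex pair in $S_c$ is the one not separated by cuts in $S_c$ — the single cut in $S_c$ can separate at most one vertex pair, and all others must be separated inside edge-selection subspaces, which (by the construction of primary examples $U, V$ and the fact that only one edge pair survives per $S_e$) forces all-but-one vertex pair in $S_c$ to be cut; so $\phi(c)$ is well defined. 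Property (i), $\col(\phi(c)) = c$, is immediate. For (ii): given an edge $\{c,c'\}$ of $H$, the surviving edge pair in $S_{cc'}$ corresponds to some $G$-edge $\{u, v\}$ with $\col(u) = c$, $\col(v) = c'$; the red example of this pair reaches a red leaf only if it is separated from the imbalanced blue example of the same pair in some \emph{other} subspace, which must be $S_c$ or $S_{c'}$, forcing $u = \phi(c)$ and $v = \phi(c')$; hence $\{\phi(c), \phi(c')\} = \{u,v\} \in E(G)$.

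The main obstacle is the rigidity/accounting lemma together with the subtle role of the dummy pair $U$ in each vertex-verification subspace: one must argue carefully that a cut separating $U$ "too early" — before both dummy pairs $L, R$ of some dummy tuple in an edge-selection subspace have been separated — splits $L$ from $R$ across that cut and thereby inflates the number of cuts needed in the edge-selection subspace beyond its budget, so that in an optimal tree the vertex-verification cut is effectively independent of the edge-selection cuts. Making this interaction precise (that the various subspaces' forced cuts are genuinely additive, with no cut doing double duty) is where the real work lies; everything else is bookkeeping over the gadget layout already specified.
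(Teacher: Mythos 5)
Your forward direction matches the paper's construction essentially verbatim (cut off everything except the edge pair of $\phi(e)$ in each edge-selection subspace, then one cut per vertex-verification subspace at the vertex pair of $\phi(c)$, totalling exactly $s$), so that half is fine. The problem is the reverse direction, where your entire argument rests on the ``rigidity lemma'' --- that each edge-selection subspace $S_e$ receives exactly $(j_e-1)(m_G+4)$ cuts and each vertex-verification subspace exactly one, because each subspace ``independently forces at least its allotted number of cuts.'' That premise is not true as stated, and you yourself flag it as ``where the real work lies'' without supplying it. What is independently forced in $S_e$ is only the $(j_e-1)(m_G+3)$ cuts coming from the dummy tuples (via the $L_i$, $R_i$, $P$ pairs). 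The remaining $j_e-1$ ``free'' cuts --- the ones that separate all but one edge pair --- are \emph{not} forced by $S_e$ in isolation: a priori, an edge pair's primary examples could instead be separated by cuts in the vertex-verification subspaces of its endpoints, since those examples also differ there. If that were possible for two distinct $G$-edges with the same color pair, your map $\phi$ would not be well defined. Ruling this out is exactly the crux, and your proposal leaves it as an acknowledged obstacle rather than an argument.

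The paper closes this gap with a genuinely different mechanism: it does not establish per-subspace budget equality at all. It only counts the $(m_G+3)(m_G-m_H)+n_H$ \emph{required} cuts as a lower bound, picks among all solutions a tree $T$ minimizing the number of nodes that cut in vertex-verification subspaces, and then runs a lowest-common-ancestor argument on the nodes $t,t'$ that would separate two ``chosen'' edges $f,f'$ of the same color pair. The LCA $a$ either lies in a vertex-verification subspace --- in which case it splits every $L_i$ from the corresponding $R_i$ of an intervening dummy tuple, forcing $m_G+1$ cuts beyond the required ones and blowing the budget --- or it lies in $S_e$ between $f$ and $f'$, in which case one of $t,t'$ can be replaced by a cut in $S_e$, contradicting the extremal choice of $T$. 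This is precisely the ``cut doing double duty / $U$ separated too early'' interaction you identified but did not resolve. A secondary, fixable slip: you define $\phi(c)$ as the vertex whose vertex pair is \emph{not} separated by the cut in $S_c$; it should be the vertex whose pair \emph{is} separated by that single cut (equivalently, the one not separated inside the edge-selection subspaces). As it stands, the proposal is an accurate roadmap of the difficulty but not a proof of the backward implication.
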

\begin{proof}
  $\Rightarrow$: If $(G,H,\col)$ is a yes-instance for \subiso, then there is a subgraph isomorphism~$\phi$ from $H$ to $G$.
  We construct a decision tree for~$(E,\lambda,s)$ as follows.
  The decision tree's inner nodes form a path and thus we specify the sequence of cuts that we make in the nodes of the path.
  We start by placing cuts in the edge-selection subspace~$S_e$ for each edge~$e\in E(H)$.
  Denote the edges of~$G$ whose endpoints match the colors of~$e$ by~$e_1, e_2, \dots,e_n$ as in the description of~$S_e$.
  Let~$i$ be such that~$\phi(e) = e_i$.
  We first make the following cuts in the $y$-dimension of $S_e$ in ascending order.
  Cut between the edge pairs of~$e_1, e_2, \dots,e_{i-1}$, and make all cuts prescribed by dummy tuples succeeding~$e_1,\dots,e_{i-1}$.
  Then we make the following cuts in the $x$-dimension of $S_e$ in descending order.
  Cut between the edge pairs of~$e_n, e_{n - 1}, \dots,e_{i+1}$, and make all cuts prescribed by dummy tuples preceding~$e_n, e_{n - 1}, \dots,e_{i+1}$; see Fig.~\ref{fig:vertex-selection-b}b.
  Note that these are $(m_G + 4) \cdot (n-1)$ cuts in this subspace, and in total~$(m_G + 4) \cdot (m_G-m_H)$ cuts in all of the~$m_H$ edge-selection subspaces.
  Additionally, note that each cut has one side on which there are only red or only blue examples, i.e., these cuts correspond to a decision tree whose internal nodes form a path.
  After performing these cuts for all edges of~$H$, all pairs of dummy examples associated with dummy tuples are separated from each other.
  The only primary example pairs that are not yet separated are those that correspond to the edges whose edge pairs have not been cut in the edge-selection subspaces. These examples are not separated from each other either, because they are not separated from the irrelevant examples in their respective subspaces.

  \looseness = -1
  Afterwards, for each vertex~$c$ of~$H$, we cut between the vertex pair of the vertex~$\phi(c) \in V(G)$ in the vertex-verification subspace~$S$ of~$c$.
  Observe that this separates both the dummy example~$D$ of~$S$ and one of the two pairs of examples that corresponds to the selected edges incident to~$\phi(c)$ (the other one is separated by the corresponding cut in the vertex-verification subspace of the other endpoint).
  Note that each of these $n_H$ cuts separates a set of red examples from the remaining examples, and they can be performed in an arbitrary order.
  In total, we have used~$(m_G + 4) \cdot (m_G-m_H)+n_H = s$ cuts, to separate the red from the blue examples, and in fact, the internal nodes of the decision tree form a path.

  $\Leftarrow$: Now assume that $(E,\lambda)$ admits a decision tree~$T$ with~$s$ cuts.
  First, observe that for each dummy tuple, the tree $T$ contains $m_{G} + 3$ distinct cuts by the way we have defined the projections of the dummy examples.
  In total, there are $(m_G + 3) \cdot (m_G - m_H)$ such cuts.
  Furthermore, since in each vertex-verification subspace there is a dummy pair, there are further $n_H$ distinct cuts in~$T$.
  Call all of these cuts \emph{required}.
  
  Say that an edge of $G$ is \emph{chosen} if its primary red and blue examples are not cut apart in an edge-selection subspace by~$T$.
  Among all solution decision trees, pick $T$ such that it has the minimum number of nodes that make cuts in vertex-verification subspaces.
  We claim that, for each edge $e$ of $H$, there is at most one chosen edge in $G$ whose endpoints have the same colors as~$e$.
  For a contradiction, assume that edges $f$ and $f'$ of $G$ are chosen and their endpoints' colors are the endpoints of~$e \in E(H)$.
  Since the primary examples of $f$ and $f'$ only differ in $S_e$ and in vertex-verification subspaces, these examples are cut apart in vertex-verification subspaces.
  Let $t, t'$ denote nodes in $T$ that cut the primary examples of $f$ and $f'$, respectively, via a cut in a vertex-verification subspace.
  Consider the lowest common ancestor~$a$ of $t$ and~$t'$.
  Observe that $E[a]$ contains both the primary examples of $f$ and $f'$.
  Thus, $E[a]$ also contains all examples that occur in the dummy tuples between $f$ and $f'$ in the edge-selection subspace $S_e$.

  Consider one such dummy tuple $D$ and let $b, b'$ be the children of $a$.
  Consider the case where $a$ equals $t$ or $t'$, that is, $a$ is a cut in a vertex-verification subspace corresponding to the endpoints of~$e$.
  Then, $a$ cuts each $R_i$ of $D$ from each $L_i$ of $D$.
  Thus for each $i \in \{1, 2, \ldots, m_G + 1\}$ it is impossible for $T$ to cut the pair $R_i$ and $L_i$ at the same time in the subtree of $T$ rooted at $a$ and thus, in addition to the required cuts, there are $m_G + 1$ further cuts in $T$.
  That is, the number of cuts in $T$ is at least $(m_G + 3) \cdot (m_G - m_H) + n_H + m_G + 1 > (m_G + 4) \cdot (m_G - m_H) + n_H = s$, a contradiction.
  Thus, $a$ is not equal to $t$ or $t'$.
  If $a$ does not cut at least one primary example of $f$ from at least one primary example of $f'$, then we may choose $t$ or $t'$ instead in a subtree rooted at $b$ or $b'$.
  Thus, assume that $a$ cuts at least one primary example of $f$ from at least one primary example of $f'$.
  Again, these primary examples only differ in $S_e$ and in vertex-verification subspaces.
  This implies that $a$ is either a cut in a vertex-verification subspace corresponding to an endpoint of $e$ or a cut in~$S_e$ somewhere between the primary examples of $f$ and $f'$.
  In the first case, we get a contradiction in the same way as for the case where $a$ is equal to $t$ or $t'$.
  So assume that $a$ is a cut in $S_e$ somewhere between the primary examples of $f$ and $f'$.
  Since the unlabeled point in $S_e$ is on exactly one side of this cut and all examples that are irrelevant for $S_e$ project to the unlabeled point, 
  it follows that either $E[b]$ or $E[b']$ contains only dummy examples or primary examples of edges with the same colors as~$e$; say $E[b]$ does.
  Assume that $t$ is in the subtree of $T$ rooted at~$b$.
  This is without loss of generality by renaming.
  Note that $E[t]$ cannot contain two pairs of primary examples corresponding to different edges:
  In this case, $E[t]$ contains also all examples of the dummy tuples between the two edges in $S_e$.
  Performing the cut at $t$ in a vertex-verification subspace would hence introduce $m_G + 1$ cuts in addition to the required ones (as before), a contradiction.
  Thus, $E[t]$ contains among primary examples only those of $f$.
  Hence, we may replace~$t$ by a cut in $S_e$, a contradiction, because $T$ has the minimum number of nodes that make cuts in vertex-verification subspaces.
  Thus, indeed, for each edge $e$ of $H$, there is at most one chosen edge in $G$ whose endpoints have the same colors as~$e$.

  By the calculation of required edges it now follows that also there is at least one chosen edge for each edge $e$ of $H$.
  For each edge $e$ in $H$ the primary examples of the chosen edge for $e$ need to be separated in both vertex-verification subspaces corresponding to the endpoints of~$e$.
  Furthermore, for each vertex-verification subspace there is one required cut in that subspace.
  Thus, two chosen edges whose endpoints have the same color are adjacent to the same vertex of that color.
  Thus, the chosen edges induce a subgraph isomorphism~$\phi$ from~$H$ to~$G$.
\end{proof}

\looseness = -1
Since the reduction takes polynomial-time, and $d = 2m_H+n_H \leq 4m_H$, \cref{thm:whardness} readily follows.

\section{Conclusion}
We have begun charting the boundary of tractability for learning small decision trees with respect to the number~$d$ of dimensions.
While exponents in the running time need to depend on~$d$, this dependency is captured by the number of leaves labeled with the first class, the class with the fewest leaves.
It would be interesting to analyse what other features of the input or output can capture the combinatorial explosion induced by dimensionality; this can be done by deconstructing our hardness result~\cite{komusiewicz_deconstructing_2011}.
Interesting parameters that are necessarily unbounded for the reduction to work include the number of examples that have the same feature values and the maximum number of alternations between labels when sorting the examples according to their feature values in a dimension.

\begin{ack}
Main ideas for the results of this paper were developed in the relaxed atmosphere of Dagstuhl Seminar 21062 on \emph{Parameterized Complexity in Graph Drawing}, organized by Robert Ganian, Fabrizio Montecchiani, Martin Nöllenburg, and Meirav Zehavi.
Manuel Sorge acknowledges funding by the Alexander von Humboldt Foundation. Jules Wulms acknowledges funding by the Vienna Science and Technology Fund (WWTF) under grant ICT19-035.
\end{ack}

\bibliographystyle{abbrvnat}
\bibliography{refs}

\end{document}